\newtheorem{prop}{Proposition}[section]
\newcommand{\Lx}{L_{\text{x}}}
\newcommand{\Lh}{L_{\text{h}}}
\newcommand{\Lg}{L_{\text{g}}}
\newcommand{\Ln}{L_{\text{n}}}
\newcommand{\yin}{\boldsymbol{y}_{\text{in}}}
\title{Multipath-enabled private audio with noise}
\name{Anadi Chaman$^*$, Yu-Jeh Liu$^*$, Jonah Casebeer$^\dagger$, Ivan Dokmani\'c$^*$ }
\address{Departments of $^*$Electrical and Computer Engineering and $^\dagger$Computer Science\\
	University of Illinois at Urbana-Champaign}
\begin{document}
	\ninept
	\maketitle
	\begin{abstract}
		We address the problem of privately communicating audio messages to multiple listeners in a reverberant room using a set of loudspeakers. We propose two methods based on emitting noise. In the first method, the loudspeakers emit noise signals that are appropriately filtered so that after echoing along multiple paths in the room, they sum up and descramble to yield distinct meaningful audio messages only at specific \emph{focusing spots}, while being incoherent everywhere else. In the second method, adapted from wireless communications, we project noise signals onto the nullspace of the MIMO channel matrix between the loudspeakers and listeners. Loudspeakers reproduce a sum of the projected noise signals and intended messages. Again because of echoes, the MIMO nullspace changes across different locations in the room. Thus, the listeners at focusing spots hear intended messages, while the acoustic channel of an eavesdropper at any other location is jammed. We show, using both numerical and real experiments, that with a small number of speakers and a few impulse response measurements, audio messages can indeed be communicated to a set of listeners while ensuring negligible intelligibility elsewhere.

	\end{abstract}
	\begin{keywords}%
		Private audio communication, speech privacy, multi-channel convolutional synthesis, speech intelligibility.
	\end{keywords}
	\section{Introduction}
	\label{sec:intro}
	
	Consider the problem of sending audio messages to different listeners in a reverberant room, while making sure that each message can only be understood by its intended recipient. Importantly, no eavesdropper anywhere in the room should be able to understand any of the messages. 
		{\let\thefootnote\relax\footnotetext{Project webpage:  \href{https://swing-research.github.io/private-audio/}{https://swing-research.github.io/private-audio/} }}

	This problem is related to personal audio zones and sound field reproduction \cite{poletti2008an,wu2011spatial,betlehem2015personal,elliott2012robustness,cai2014sound,choi2002generation,berkhout1993acoustic,ward2001reproduction,jin2013multizone} where the goal is to reproduce different sound streams in a few predefined \emph{zones} in a room while minimizing the sound level everywhere else. In most of these approaches, however, an eavesdropper with a sensitive microphone (or a good ear) can easily understand the messages. The reason is that the loudspeakers simply reproduce linearly filtered versions of desired messages which  remain highly correlated with any residual error signal. 
	
	
	To address the problem of private audio communication, we propose two methods. As an extension of our previous work \cite{liu2018cocktails}, the first approach communicates audio messages to intended \emph{focusing spots} by emitting appropriately filtered white Gaussian noise signals from loudspeakers. The filters are constructed such that after passing through specific sets of paths and time delays, these filtered random signals sum up coherently as they arrive at the target focusing points. On the other hand, they yield incoherent signals at locations with different sets of signal propagation paths. This solution is expected to work well when a room has high spatial diversity of acoustic channels.

	
	In our second approach, the idea is to send random noise from loudspeakers in addition to message signals, such that the noise signals add up to zero only at the intended listening points, while they continue to mask the messages everywhere else. This results in the interception of clean audio messages at the focusing spots while having low intelligibility at other locations. This technique is inspired by standard methods in wireless networking on jamming eavesdroppers \cite{negi2005secret,goel2008guaranteeing}. However, to the best of our knowledge, the prior works consider fading wireless channels without explicitly considering inter-symbol interference (echoes). While this could be a fair assumption for networks like WiFi where sampling times are much larger than propagation delays of wireless signals, this is not the case in room acoustics. Hence, we adapt this jamming scheme to work with long convolutional channels. 
	
	Privacy in multizone reproduction systems was first studied in \cite{donley2016improving} where the authors also use noise to mask message signals in ``quiet'' zones to reduce intelligibility. While their method is applicable in both anechoic and reverberant conditions, the performance is degraded in the presence of echoes. On the other hand, as we elaborate later, our  methods critically rely on echoes and multipath propagation. In particular, our solutions exploit the spatial diversity of room impulse responses (RIRs) across different locations in a room and the redundant degrees of freedom in signal transmission provided by multiple loudspeakers. Unlike in multizone methods, however, we can only deliver messages to a small, fixed region of space. On the other hand, we achieve good performance using a rather small number of loudspeakers and impulse response measurements (in our experiments we use only six).
	
	The problem of jamming eavesdroppers has been studied extensively in wireless communication. The theoretical foundation was laid by Shannon \cite{shannon} and later extended by \cite{csiszar1978broadcast,wyner1975wire} who showed the feasibility of secrecy if the communication channel of an eavesdropper is degraded. The methods in \cite{negi2005secret,goel2008guaranteeing,goel2005secret} use artificial noise; \cite{barros2006secrecy} showed the possibility of secret communication as a consequence of slow wireless fading. Prior works have also looked at a related problem of eavesdropper detection \cite{mukherjee2012detecting,chaman2018ghostbuster,stagner2011practical}.

	In this paper, we empirically show that unlike traditional multi-zone sound field reproduction which is usually degraded in reverberant environments \cite{jin2015theory,betlehem2005theory}, both of our proposed approaches give excellent results in the presence of echoes since echoes enhance spatial diversity. We derive conditions needed to generate desired messages at the focusing spots, and demonstrate both numerically and through real experiments that with six speakers and the knowledge of RIRs at the intended listening points, private audio communication is effectively achievable. In addition, we compare the robustness of the two approaches to system failures and uncertainties.

	

	\section{Problem Formulation}
	\label{sec:problem_formulation}
	Consider a system with $L$ loudspeakers, each emitting an audio signal to $K$ listeners. Without loss of generality, let the desired length of the signal $\boldsymbol{y}_k$ at the $k^{\text{th}}$ listener be $N$. We also assume that the room impulse response (RIR) between the $k^{\text{th}}$ listener and the $i^{\text{th}}$ speaker is a sequence $h_{ki}$ which is $L_{\text{h}}$ long {and known a priori}.
	
	This signal received by the $k^{\text{th}}$ listener is given as a sum of convolutions:
	\begin{equation}\label{linear_sum_convolve}
	{y}_k(n)=\sum_{i=1}^{L}  ({h}_{ki}*{x}_i)(n) ,\text{ }n=0,1,...,N-1,
	\end{equation}
	where $\boldsymbol{x}_i\in \mathbb{R}^{\Lx}$ is the signal transmitted by the $i^{\text{th}}$ speaker with length $\Lx=N-\Lh+1$, {and $*$ represents linear convolution}. We define intended message vector $\yin \in \mathbb{R}^{NK}$ as a concatenation of all $\boldsymbol{y}_k\in \mathbb{R}^N$: $\boldsymbol{y}_{\text{in}}=[\boldsymbol{y}_1^\T, \boldsymbol{y}_2^\T, \ldots \boldsymbol{y}_K^\T]^\T$. Similarly, we define channel matrices $\boldsymbol{H}_k$ of size $N\times L\Lx$ as $[\boldsymbol{H}_{k1}, \boldsymbol{H}_{k2}, \ldots, \boldsymbol{H}_{kL} ]$, where each $\boldsymbol{H}_{ki}$ is a Toeplitz convolution matrix composed using  $\boldsymbol{ h}_{ki}$. Defining $\boldsymbol{H}=[\boldsymbol{H}_1^\T, \boldsymbol{H}_2^\T, \ldots, \boldsymbol{ H}_K^\T]^\T$ and $\boldsymbol{x}=[\boldsymbol{x}_1^\T,   \boldsymbol{x}_2^\T, \ldots, \boldsymbol{x}_L^\T]^\T$, \eqref{linear_sum_convolve} can be rewritten as:
	\begin{equation}\label{HX_matrix_eqn}
	\yin=\boldsymbol{Hx}.
	\end{equation}
	
	If the matrix $\boldsymbol{H}$ has full row rank, we can reconstruct any desired message signals at the $K$ listeners. A well-known solution to \eqref{HX_matrix_eqn} is given by $\boldsymbol{x}=\boldsymbol{ H}^\dagger\boldsymbol{y}_{\text{in}}$, where $\boldsymbol{ H}^\dagger$ is the pseudoinverse of $\boldsymbol{ H}$. Though this solution suffices for message reconstruction at the listeners, it does not enforce unintelligibility at other locations. We could, however, exploit the additional degrees of freedom provided by the nullspace of $\boldsymbol{H}$ to generate a suitable $\boldsymbol{x}$ that ensures signal degradation outside the target focusing spots.
	
	We note that for typical audio sampling rates, RIR lengths and message lengths, $\boldsymbol{H}$ is far too large to compute the pseudoinverse explicitly. That is why we solve all least-squares design problems in this paper by the conjugate gradient method. Since the involved matrices are all block-Toeplitz, the conjugate gradient method can be efficiently implemented using fast Fourier transforms.
	
	\section{The two approaches}
	\label{the_two_approaches}
	As per \eqref{HX_matrix_eqn}, $\boldsymbol{x}$ can be suitably chosen to ensure that the message signals outside the focusing spots remain unintelligible. In this section, we present two methods to achieve this task, each constructing $\boldsymbol{x}$ in a different way: (i) multichannel convolutional synthesis (MCCS) by noise and (ii) noise in the nullspace approach.
	
	\subsection{Multichannel convolutional synthesis by noise}
	\label{sec:cmg_description}
	Recall from \eqref{linear_sum_convolve} that the signal arriving at the $k^{\text{th}}$ listener is $\boldsymbol{y}_k=\sum_{i=1}^{L}  \boldsymbol{h}_{ki}*\boldsymbol{x}_i$. In this first approach, we constrain $\boldsymbol{x}_i$ to be a convolution of a filter $\boldsymbol{g}_i$ of length $\Lg$ with a noise signal $\boldsymbol{n}_i$ of length $\Ln$, drawn from standard normal distribution. This is equivalent to 
	\begin{equation}
	\label{x_constrain_by_n}
	\boldsymbol{x}_i=\boldsymbol{N}_i\boldsymbol{g}_i,\text{ }i=1,2,\ldots,L,
	\end{equation}
	where $\boldsymbol{N}_i$ is an $\Lx \times \Lg$ Toeplitz convolution matrix composed using the vector $\boldsymbol{n}_i$, with $\Lx=\Lg+\Ln-1$. We define $\boldsymbol{g}=[\boldsymbol{g}_1^\T, \boldsymbol{g}_2^\T, \ldots, \boldsymbol{g}_L^\T]^\T$ and a block diagonal matrix $\boldsymbol{N}$ as
	$$\boldsymbol{N}=\diag([\boldsymbol{N}_1,\boldsymbol{N}_2, \ldots ,\boldsymbol{N}_L]).$$
	Then equations in \eqref{x_constrain_by_n} can be combined for all $i \in \set{1,\ldots,L}$ to give $\boldsymbol{x}=\boldsymbol{Ng}$ and 
	\begin{equation}\label{filtering_equation}
	\yin=\boldsymbol{HNg}.
	\end{equation}
	Given $\boldsymbol{HN}$ and $\boldsymbol{y}_{\text{in}}$, $\boldsymbol{g}$ can be computed using conjugate gradient method. 
	
	This model constrains $\boldsymbol{x}$ to lie on a subspace of random vectors. To understand why, consider the signal emitted by the $i^{\text{th}}$ loudspeaker, $\boldsymbol{x}_i$, which can be written as 
	$$
	x_i(n)=\sum_{p=0}^{\Ln-1} n_i(p)g_i(n-p),\text{ }n=0,1,...,\Lx-1.
	$$

	{We can interpret $\boldsymbol{x}_i$ as a sum of randomly-scaled translates of filter $\boldsymbol{g}_i$. For all speakers, $\boldsymbol{g}_i$ are constructed such that convolutions of $\boldsymbol{x}_i$ with room impulse responses sum up to yield the desired messages only at the listeners. Thus, a specific set of RIRs $\set{\boldsymbol{h}_{ki}}$, corresponding to the intended listener--speaker pairs correctly descrambles the translates. In a room with rich spatial diversity, locations other than the intended listening points will be characterized by a different set of RIRs. We thus cannot expect the descrambling to yield the correct output, and the randomness of $\boldsymbol{n}_i$ then ensures non-intelligibility of the resulting signal.}

	\subsection{Noise in the nullspace}
	
	We adapt the second approach from the wireless communications literature.
	Concretely, $\boldsymbol{x}$ is chosen as a sum of a message-carrying vector $\boldsymbol{s}\in \mathbb{R}^{L\Lx}$ and a noise-like signal $\boldsymbol{w}\in \mathbb{R}^{L\Lx}$, i.e., $\boldsymbol{x}=\boldsymbol{s}+\boldsymbol{w}$. We construct $\boldsymbol{s}$ and $\boldsymbol{w}$ to satisfy $\boldsymbol{Hs}=\yin$ and $\boldsymbol{Hw}=\boldsymbol{0}$, so that
	\begin{equation}
	\label{X_def_additive_method_expanded}
	\yin=\boldsymbol{H}(\boldsymbol{s}+\boldsymbol{w})=\boldsymbol{H}\boldsymbol{s}.
	\end{equation}

	This is achieved by choosing $\boldsymbol{w}$ as the projection of a random noise vector on the nullspace of the channel matrix $\boldsymbol{H}$, i.e., $\boldsymbol{w}=\boldsymbol{P_{\mathcal{N}(H)}}\boldsymbol{v}$, where the entries of $\boldsymbol{v}$ are i.i.d. standard Gaussian and $\boldsymbol{P_{\mathcal{N}(H)}}$ is the projector on the null space of $\boldsymbol{H}$.
	
	As mentioned in Section \ref{sec:problem_formulation},  $\boldsymbol{H}$ is typically large, which makes the direct computation of its nullspace a prohibitively complex task. Instead, we first find the projection of $\boldsymbol{v}$ on the row space of $\mH$ by solving
	\begin{equation}\label{eq:row_space}
	\hat{\boldsymbol{z}}=\underset{\boldsymbol{z}}{\mathrm{argmin}} \ \| \boldsymbol{ v}-\boldsymbol{H}^\T\boldsymbol{z} \|_2^2.
	\end{equation}
	We again use the conjugate gradient method to solve  \eqref{eq:row_space} using fast Fourier transforms since $\mH$ is block-Toeplitz. Once $\hat{\vz}$ is found, the nullspace projection $\boldsymbol{P_{\mathcal{N}(H)}}\boldsymbol{v}$ is simply $\boldsymbol{v}-\boldsymbol{H}^\T\boldsymbol{\hat{z}}$.

	\begin{figure*}
		\centering
		\includegraphics[width=1\textwidth]{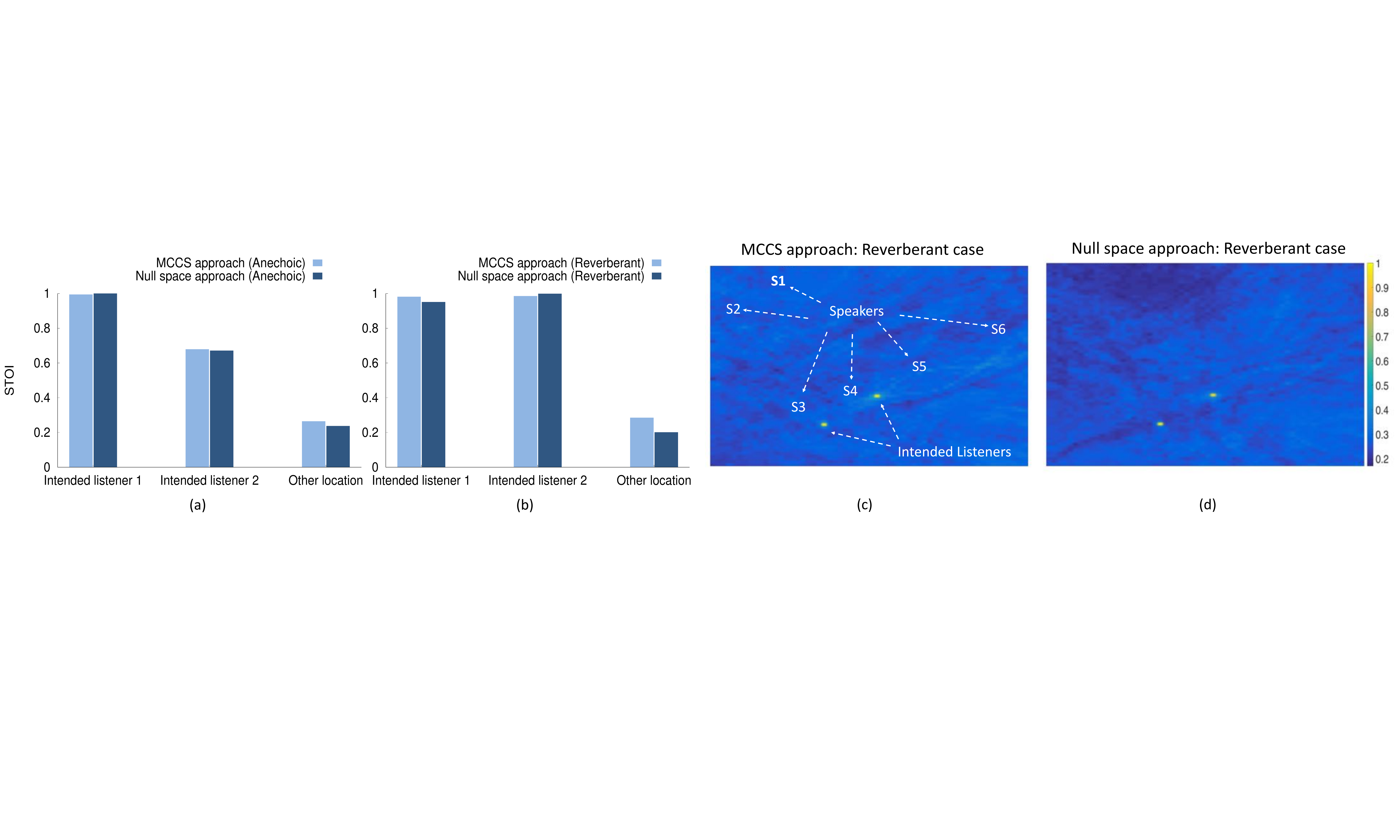}
		\caption{STOI scores at 2 intended listeners and one additional location using MCCS and nullspace approach in (a) anechoic and (b) reverberant setting.  (c)-(d) Heat maps reflecting STOI scores at 4200 locations in a simulated room of size $7\text{ }$m$\times 8\text{ m}$. Speakers illustrated as S1-S6.}
		\label{fig:stoi_heat_map}
	\end{figure*}

	\section{Conditions for perfect reconstruction}
	\label{perfect_recon}
	In this section, we present the conditions needed to ensure perfect reconstruction of any set of message signals of length $N$ at the $K$ listeners (or any $\yin \in \mathbb{R}^{NK}$) for both approaches. 
	
	\subsection{Multi-channel convolutional synthesis by noise}
	\label{sec:perfect_recon_filtered}
	
	From \eqref{filtering_equation}, perfect reconstruction can be achieved if the overall channel matrix $\boldsymbol{HN}$ has full row rank, $NK$. 
	
	{We make the assumption that the room is drawn randomly from a continuous distribution. (For example, let the corners be chosen uniformly at random within fixed balls.) We also assume that the loudspeaker and listener positions are placed at random according to an absolutely continuous distribution. These assumptions imply that the distribution of the nullspace of $\mH$ is absolutely continuous with respect to the Haar measure on the Grassmannian.} Then, we have the following result.
	\begin{prop}
		\label{prop4}
		Suppose $L L_g \geq NK$. Then $\mH \mN$ has full row rank with probability $1$.
	\end{prop}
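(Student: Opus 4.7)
The plan is to decompose the rank condition into two transversality statements that each hold almost surely under the given distributional assumptions. First I would observe that $\mN$ is block-diagonal with blocks $\mN_i$ being $\Lx \times \Lg$ Toeplitz convolution matrices built from $\mathbf{n}_i$. Each $\mN_i$ has full column rank $\Lg$ whenever the $z$-transform of $\mathbf{n}_i$ is a nonzero polynomial, which holds almost surely for Gaussian $\mathbf{n}_i$ (indeed it holds whenever $\mathbf{n}_i \neq \mathbf{0}$). Hence $\mN$ has full column rank $L\Lg$ almost surely, and $\mathrm{image}(\mN)$ is then an $L\Lg$-dimensional subspace of $\mathbb{R}^{L\Lx}$.

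Next I would invoke the rank identity
\[
\mathrm{rank}(\mH\mN) \;=\; \dim(\mathrm{image}(\mN)) \;-\; \dim\bigl(\mathrm{image}(\mN) \cap \ker(\mH)\bigr),
\]
which reduces the claim to showing $\dim(\mathrm{image}(\mN) \cap \ker(\mH)) = L\Lg - NK$ almost surely. The hypothesis on the room implies that $\mH$ has full row rank $NK$ almost surely (this is already implicit in the statement that $\ker(\mH)$ is distributed absolutely continuously on the Grassmannian $\mathrm{Gr}(L\Lx - NK,\, L\Lx)$), so $\ker(\mH)$ has dimension $L\Lx - NK$. The generic-position value of the intersection dimension for subspaces of dimensions $L\Lg$ and $L\Lx - NK$ inside $\mathbb{R}^{L\Lx}$ is $\max\{0,\, L\Lg + (L\Lx - NK) - L\Lx\} = L\Lg - NK$, which is nonnegative by the hypothesis $L\Lg \geq NK$; this is exactly the required value.

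The crux is justifying that this generic intersection dimension is attained almost surely. I would appeal to the standard fact that for any fixed subspace $V \subset \mathbb{R}^{L\Lx}$, the set of $d$-dimensional subspaces $U$ with $\dim(U \cap V)$ strictly larger than the generic value is a proper algebraic subvariety of $\mathrm{Gr}(d,\, L\Lx)$, hence has Haar measure zero and therefore measure zero under any distribution absolutely continuous with respect to Haar. Conditioning on $\mathbf{n}$ fixes $V = \mathrm{image}(\mN)$, and the assumed absolute continuity of the distribution of $\ker(\mH)$ then gives transversality conditionally almost surely in $\mathbf{h}$. Fubini upgrades this to a joint almost-sure statement, after which the rank identity above yields $\mathrm{rank}(\mH\mN) = NK$.

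The step I expect to require the most care is formalizing the transversality fact: one needs that the non-transverse locus is a \emph{proper} subvariety, which amounts to exhibiting one pair of the correct dimensions that is transverse (e.g.\ spans of disjoint coordinate axes) and then expressing the rank-drop condition on a local chart as the simultaneous vanishing of certain minors. An alternative, more hands-on route would avoid the Grassmannian assumption and instead apply the polynomial-vanishing argument directly: use the commutativity of convolution to rewrite $\mH\mN\mathbf{g} = \widetilde{\mH}\mathbf{g}$ with effective RIRs $\widetilde{\mathbf{h}}_{ki} = \mathbf{h}_{ki} * \mathbf{n}_i$, then exhibit an explicit $(\mathbf{h},\mathbf{n})$ making $\widetilde{\mH}$ full row rank via a MINT-style construction. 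This alternative is feasible but strictly more laborious than exploiting the Grassmannian hypothesis already present in the statement.
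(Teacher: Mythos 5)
Your proposal is correct and follows essentially the same route as the paper's proof: both reduce the claim to showing that $\ker(\mH)$ intersects $\mathrm{range}(\mN)$ in the generic dimension $L\Lg - NK$, which holds almost surely because $\ker(\mH)$ is absolutely continuously distributed on the Grassmannian and independent of $\mN$. You simply make explicit several steps the paper leaves implicit --- the almost-sure full column rank of $\mN$, the rank--nullity identity for the product, and the proper-subvariety/Fubini justification of generic transversality --- which strengthens rather than changes the argument.
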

	
	\begin{proof}
		We have that $\rank(\mH \mN) \leq \min \set{ \rank(\mH), \rank(\mN)}$ by rank inequalities. With the conditions of the proposition, this implies that $\rank(\mH \mN) \leq NK$. The only way to have a strict inequality is that the nullspace of $\mH$ intersects the range of $\mN$ along a subspace of dimension greater that $LL_g - NK$. On the other hand, because the nullspace of $\mH$ is continuously distributed and independent from $\mN$, it will intersect the range of $\mN$ exactly along a subspace of dimension $L L_g - NK$ with probability 1.
	\end{proof}

	This result implies that for most setups in sufficiently reverberant rooms, we will be able to produce the desired messages at the listener positions.

	\subsection{Noise in nullspace approach}
	\label{sec:recon_additive_noise}
	From \eqref{X_def_additive_method_expanded}, $\boldsymbol{ H}$ needs to have full row rank for perfect reconstruction of all $\yin \in \mathbb{R}^{NK}$. Similar to the previous case, since $\boldsymbol{ H}$ is a function of the RIRs between the speaker-listener pairs, it is not completely in the user's control to ensure that it has full rank as it depends on room geometry and the spatial diversity of RIRs. {In practice, however, if we assume a randomized setup and room as in the previous section, and the conditions of Proposition \ref{prop4_null} are satisfied, then $\boldsymbol{H}$ can be expected to have full row rank with probability 1.}
	\begin{prop}
		\label{prop4_null}
		{The following conditions are necessary for perfect reconstruction of message signals at the listeners.}
		
		\begin{enumerate}[label=(\alph*)]
			\item The number of rows of $\boldsymbol{ H}$ should be at least as large as the length of $\yin \implies (\Lx+\Lh-1) \geq N$. 
			\item There should be at least as many columns as rows in $\boldsymbol{ H}$. 
			\item $\Lx$ needs to be greater than the highest relative time delay among each listener-speaker pair. 
			
		\end{enumerate}
		
	\end{prop}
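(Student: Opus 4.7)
The plan is to prove necessity by contrapositive: if any of the three conditions fails, then the channel matrix $\boldsymbol{H}$ cannot have full row rank $NK$, which rules out solving $\boldsymbol{H}\boldsymbol{s}=\yin$ for arbitrary $\yin \in \mathbb{R}^{NK}$ and so precludes perfect reconstruction via \eqref{X_def_additive_method_expanded}. In each case the task is to exhibit a row of $\boldsymbol{H}$ (or a linear combination of rows) that vanishes or becomes dependent whenever the corresponding condition is violated.

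For (a), the structural observation is that each linear convolution $\boldsymbol{h}_{ki}*\boldsymbol{x}_i$ produces at most $L_x+L_h-1$ nonzero output samples. If $L_x+L_h-1<N$, the last $N-(L_x+L_h-1)$ entries of every $\boldsymbol{y}_k$ are forced to zero regardless of the choice of $\boldsymbol{x}$; the corresponding rows of every block $\boldsymbol{H}_{ki}$ and hence of $\boldsymbol{H}$ are identically zero, so any $\yin$ with nonzero entries in those positions is unreachable. For (b), the standard rank inequality gives $\mathrm{rank}(\boldsymbol{H}) \leq \min(NK, LL_x)$, so strictly fewer columns than rows immediately yields $\mathrm{rank}(\boldsymbol{H}) \leq LL_x<NK$ and $\boldsymbol{H}$ fails to be surjective onto $\mathbb{R}^{NK}$.

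For (c), I would begin by making the quantity ``highest relative time delay'' precise in terms of the RIR supports, the natural reading being the largest onset delay $\tau_{k^{\ast} i^{\ast}}:=\max_{k,i}\tau_{ki}$, where $\tau_{ki}$ is the position of the first nonzero tap of $\boldsymbol{h}_{ki}$. The intuition is that an onset delay $\tau_{ki}$ produces a leading band of zero rows in the Toeplitz block $\boldsymbol{H}_{ki}$, and shifts the nonzero support of its trailing columns past row $L_x+\tau_{ki}-1$. When $L_x$ fails to exceed $\tau_{k^{\ast} i^{\ast}}$, the resulting zero pattern propagates through the block-stacking that defines $\boldsymbol{H}$ and forces either a zero row or a nontrivial linear dependence among the rows; the goal is to exhibit explicitly such an offending row or subspace and thereby a rank drop.

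The main obstacle is (c). Conditions (a) and (b) follow immediately from elementary block-size counting and the rank inequality, but (c) requires careful bookkeeping of the shifted supports of every Toeplitz block $\boldsymbol{H}_{ki}$ and how they interact across speakers and listeners after stacking into $\boldsymbol{H}$. Moreover, the statement of (c) leaves ``relative time delay'' open to interpretation, so the first step of any rigorous proof is to fix the precise definition in terms of the RIR supports and then verify that rank deficiency is inevitable whenever the resulting inequality is violated.
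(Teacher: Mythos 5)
Your arguments for (a) and (b) are correct and are essentially elaborations of the paper's own (one-sentence) justifications: (a) the full linear convolution has length $\Lx+\Lh-1$, so if this is less than $N$ the trailing rows of every block $\boldsymbol{H}_{ki}$, and hence of $\boldsymbol{H}$, vanish identically; (b) is the rank inequality $\rank(\boldsymbol{H})\leq\min(NK,L\Lx)$. The paper's proof is no more detailed than this, so on these two items you have, if anything, done more work than the authors.

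The genuine gap is in (c), and it stems from the interpretation you chose. You read ``highest relative time delay'' as the largest \emph{absolute} onset delay $\max_{k,i}\tau_{ki}$ and propose that it creates a leading band of zero rows in $\boldsymbol{H}_{ki}$. But the number of leading zero rows of $\boldsymbol{H}_k$ is $\min_i \tau_{ki}$ and does not depend on $\Lx$ at all, so this mechanism cannot produce the stated inequality ``$\Lx$ greater than the highest relative time delay'' --- under your reading the condition would not involve $\Lx$. The word ``relative'' refers to the delay of one speaker relative to another at the same listener, i.e., the spread $\max_i\tau_{ki}-\min_i\tau_{ki}$, and the paper's own justification --- that (c) prevents ``silent regions \emph{within} a signal generated at a listening point'' --- points to gaps in the \emph{interior} of $\boldsymbol{y}_k$, not at its start. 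The mechanism is: the contribution of speaker $i$ to listener $k$ begins at sample $\tau_{ki}$ and, in the worst case of a near-impulsive RIR, injects fresh energy over roughly $\Lx$ samples from that onset; if the spread of onsets exceeds $\Lx$, there is a window of samples after the nearest speaker's contribution has been exhausted but before the farthest speaker's has begun, which forces zero (or negligible) rows of $\boldsymbol{H}_k$ in the middle of the block and hence a rank drop. Your overall contrapositive-via-rank-deficiency framing still works once the correct rows are identified, but as written the bookkeeping in (c) targets the wrong rows. (You might also note, as a caveat neither you nor the paper addresses, that with long dense reverberant tails the gap can be partially filled, so (c) is a necessary condition only in this worst-case sense; the paper's one-line proof glosses over this as well.)
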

	\begin{proof}
		$(a)$ ensures that we have sufficient samples to generate the desired message length; $(b)$ is elementary linear algebra; $(c)$ ensures that ``silent'' regions do not exist within a signal generated at a listening point.
	\end{proof}

	{It should be noted that both of our approaches satisfy the condition in (a) with equality. Also, $(b)$ gives a lower bound on the number of speakers, $L$, needed for reconstruction, i.e., $L\geq \frac{NK}{\Lx}$. This is lower than the number of speakers needed by the MCCS approach, as per Proposition \ref{prop4} 
	}
	
	\section{Experimental Results}
	\label{sec:Results}
	We evaluate the performance of the two proposed techniques using both numerical and real experiments. The numerical experiments are performed with 6 loudspeakers randomly placed in a simulated convex room of size $7$~m~$\times~8$~m having walls with absorption coefficient 0.35. RIRs between the speakers and listeners are calculated based on image source model, using the \texttt{pyroomacoustics} package \cite{scheibler2018pyroomacoustics}. We perform the real experiments in an office space of size $10$ m $~\times~6$ m using two Genelec 8030B and
	four Genelec 8010A loudspeakers. The RIRs are measured using the exponential sine sweep technique \cite{farina2000simultaneous}. In all experiments, the power of signals emitted by the loudspeakers is kept fixed. The intelligibility of the generated sounds is assessed using Short-Time Objective Intelligibility (STOI) \cite{taal2010short} measure.

	\subsection{Numerical experiments}
	\subsubsection{Perfect reconstruction: A case for echoes}
	In order to provide insight into the importance of echoes in our solution, we first perform an experiment in a simulated anechoic room.
	We randomly place two listeners inside the room and calculate STOI scores of the signals arriving there using the two approaches. An additional location is randomly chosen to examine the signal degradation outside the target focusing spots. We then repeat the same experiment but in the presence of echoes. Fig. \ref{fig:stoi_heat_map} (a) shows that in the anechoic setting, while the signal at the first listener has high intelligibility with STOI scores close to 1 for both approaches, the second listener does not. On the other hand, Fig. \ref{fig:stoi_heat_map}(b) shows that in the presence of echoes, signal intelligibility is restored at the second listener as well. This indicates that the spatial diversity provided by echoes helps in conditioning the channel matrix $\mH$, which in turn supports perfect reconstruction of messages at target locations.
	\subsubsection{Signal degradation outside focusing spots}
	Both Fig. \ref{fig:stoi_heat_map} (a) and (b) indicate that the nullspace-based method has a greater impact on signal degradation at the location chosen outside the focusing spots. To examine this further, we calculate STOI scores at 4200 locations in a simulated reverberant room and create heat maps as shown in Fig. \ref{fig:stoi_heat_map} (c) and (d). {In both plots, the bright spots at the locations of intended listeners indicate high intelligibility. However, regions outside the focusing spots in Fig. \ref{fig:stoi_heat_map} (d) have relatively lower STOI scores as compared to Fig. \ref{fig:stoi_heat_map} (c), thus indicating towards better jamming capabilities of the nullspace approach.}


	
	Both methods perform signal degradation outside the focusing spots using noise. To understand how these random signals result in unintelligibility of sound, we first investigate the role of noise variance. For 100 randomly selected speaker-listener configurations, we check the impact of increasing noise variance on STOI values for both methods. Fig. \ref{fig:degradation_plot} (a) shows a decline in median STOI scores as the input noise power is increased for the nullspace approach, whereas they do not change much for the MCCS method. 
	
	\begin{figure}[t]
		{\includegraphics[width=1.03\linewidth,height=4.2cm]{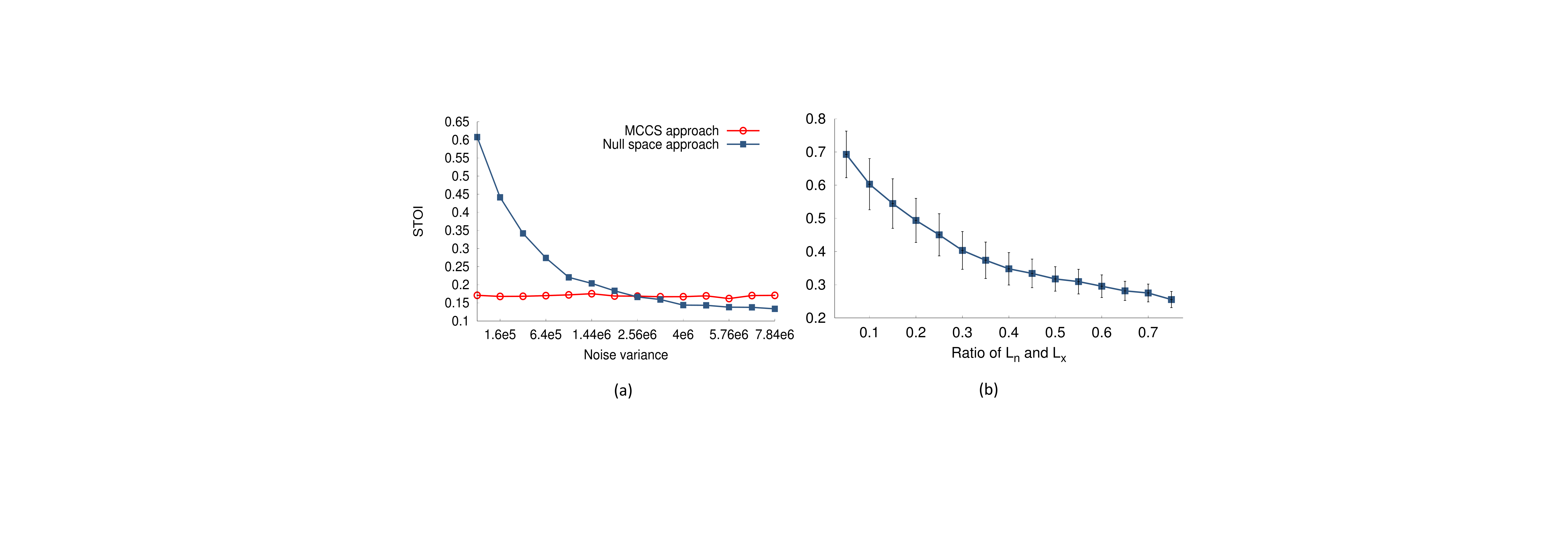}}
\caption{(a) STOI vs noise variance for the 2 methods outside focusing spots. (b) STOI vs noise length as a proportion of overall input length for MCCS approach outside focusing spots.}
		\label{fig:degradation_plot}
	\end{figure}

	This result is not surprising because in the nullspace approach, noise is fed into the loudspeakers with the message signals in an additive sense. Thus, a deterioration of SNR and subsequent STOI decline is expected with increase in noise variance. However, the signal emitted by the $i^{\text{th}}$ loudspeaker is $\boldsymbol{x}_i=\boldsymbol{n}_i*\boldsymbol{g}_i$ for MCCS method. Here, if the variance of $\boldsymbol{n}_i$ is increased, $\boldsymbol{g}_i$ simply gets scaled to preserve the original $\boldsymbol{x}_i$.

	We now investigate the factors that impact the jamming capability of the MCCS approach. Recall that this method involves ``scrambling'' of message-carrying input filters $\boldsymbol{g}_i$ by noise which are thereby appropriately descrambled at the intended locations by the correct RIR values. Thus, we expect that longer noise vectors would have a stronger impact on signal integrity when the RIR changes. To verify this claim, we vary the length of noise vectors $\Ln$ as a proportion of a fixed length $\Lx$, and calculate the STOI scores for 100 randomly chosen speaker-listener configurations. Fig. \ref{fig:degradation_plot}(b) verifies that increasing the length of noise vectors leads to a decrease in median intelligibility scores outside the focusing spots.
	
	
	{
		These results point towards an interesting phenomenon. Given unlimited available input power at the speakers, one could arbitrarily improve jamming by increasing noise power in the nullspace method. However, in MCCS approach, an arbitrary increase in jamming by increasing $\Ln$ is not feasible, because for a fixed message length $N$ and fixed $\Lh$, $\Lx=\Lg+\Ln-1$ is fixed, and one can only increase $\Ln$, as long as $\Lg\geq \frac{NK}{L}$ (from Proposition \ref{prop4}).
	}

	\vspace{-2mm}
	
	{
		\subsubsection{Robustness to system failures and uncertainties}
		\label{sec: robustness}
		We assess how the reconstruction of audio messages at the target listeners is affected by system failures and uncertainties: (i) malfunction of loudspeakers while emitting audio signals, and (ii) errors in RIR measurements. We did simulations over 100 random speaker--listener configurations and examined the behavior of the STOI scores. 
		In (i), we compute the appropriate $\mathbf{x}_i$ (to be emitted by the $i^{\text{th}}$ loudspeaker) for a system of 6 speakers. However, while measuring STOI at the listeners, not all speakers are used. Fig. \ref{fig:robustness_image}(a) shows that the STOI scores decline as more speakers are dropped, and the decline is more rapid for the nullspace method as compared to MCCS approach. 
		
		On the other hand, we analyze the robustness to channel measurement errors by computing $\mathbf{x}_i$ using RIR values with white Gaussian noise added to them. These erroneous $\mathbf{x}_i$ are then convolved with the true RIRs to compute the signals arriving at focusing spots. Fig. \ref{fig:robustness_image}(b) indicates that errors in the knowledge of RIRs before signal transmission by the loudspeakers lead to reduced intelligibility at the focusing spots. Again, the MCCS approach shows more robustness to uncertainties as compared to the nullspace approach.

	}

	\begin{figure}[t]
		{\includegraphics[width=1.03\linewidth,height=4cm]{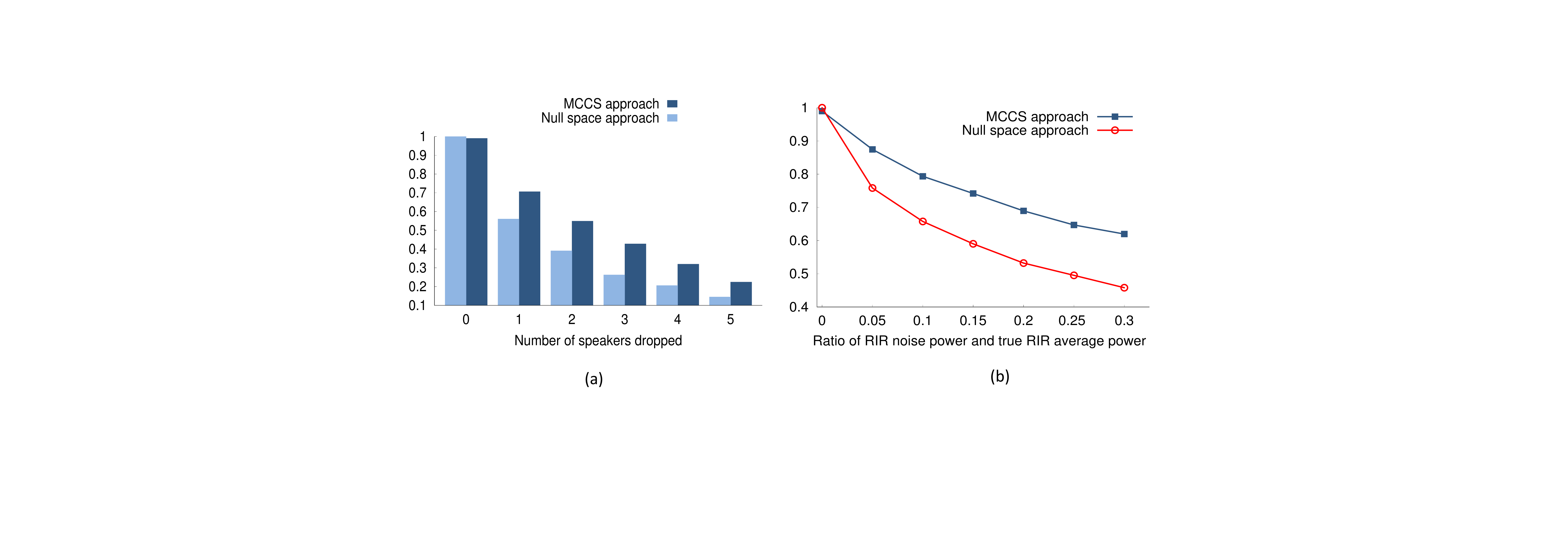}}
		\caption{Robustness analysis. Impact of (a) speaker malfunction and (b) inaccuracies in RIR estimates on STOI scores at focusing spots. }
		\label{fig:robustness_image}
	\end{figure}

\vspace{-1mm}
	
	\subsection{Experiment in a real setting}
	\label{sec: real_Experiments}
	We perform an experiment to evaluate the two approaches in a real room with 6 loudspeakers and measure the STOI scores of generated sounds with microphones at 7 locations. The experimental setup is shown in Fig. \ref{fig:stoi_real_low_noise_real_Exp} (a). Two microphones are chosen to be the focusing spots, and the rest are placed at increasing distances from Spot 2. Fig. \ref{fig:stoi_real_low_noise_real_Exp} (b) shows the measured STOI values. The observed intelligibility at the two spots is good with high STOI scores, and the signals become considerably degraded 50 cm away from the focusing spots. As expected from simulations, the nullspace approach has a stronger impact on signal degradation outside the target listeners.
	
	\begin{figure}
		
		\begin{minipage}[b]{1.0\linewidth}
			\centering
			\centerline{\includegraphics[width=1.01\linewidth,height=3.6cm]{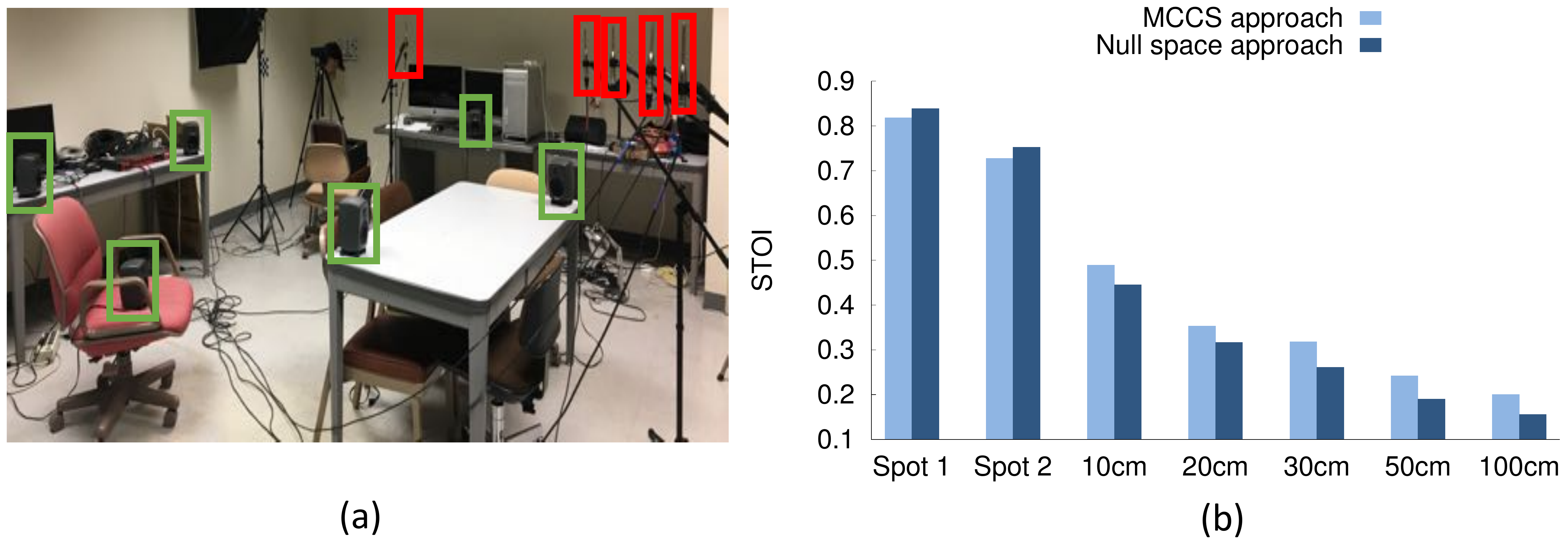}}
		\end{minipage}
		
		\caption{(a) Experimental setup: speakers represented in green, and microphones in red boxes. (b) STOI values measured at two focusing spots, and at different distances from Spot 2 in a real room setting. }
		\label{fig:stoi_real_low_noise_real_Exp}

	\end{figure}

\vspace{-1mm}
	
	\section{Conclusion}
	\label{sec:conclusions}

We present two approaches to address the private audio communication problem in a reverberant room. Both approaches are based on emitting noise signals from loudspeakers and then utilizing echoes in the room to ensure that they yield intelligible messages at selected locations, while being incoherent elsewhere. Simulated and real experiments suggest that with just 6 loudspeakers and a few impulse response measurements, we can deliver clear audio messages at the desired locations while ensuring unintelligibility everywhere else. 
	\bibliographystyle{IEEEtran}
	\balance
	\bibliography{IEEEabrv,ourbib1}

\end{document}